\documentclass[a4paper]{lipics-v2021}

\nolinenumbers

\usepackage{graphicx,amsthm,amsmath,amssymb,url,hyperref,bbm}

\title{Equivalence of Continuous-Time Markov Chains and Linear Dynamical Systems}

\author{Mihir Vahanwala}{MPI-SWS, Saarland Informatics Campus }{mvahanwa@mpi-sws.org}{https://orcid.org/0009-0008-5709-899X}{}

\authorrunning{Mihir Vahanwala}

\Copyright{Mihir Vahanwala}

\ccsdesc[500]{Theory of computation~Logic and verification}

\keywords{linear dynamical systems, continuous Markov chains}  

\begin{document}

\maketitle
\begin{abstract}
    The purpose of this short note is to record that an analogue of the following result, which is known for discrete-time linear dynamical systems, also holds in the continuous-time setting. The dynamics of a $d$-state Markov chain is governed by that of a linear dynamical system of dimension at most $d-1$; conversely, a linear dynamical system of dimension $d-1$ can be ``embedded'' into a Markov chain with $d$ states.
\end{abstract}

\section{Statements}

A discrete-time linear dynamical system of dimension $d$ is given by an update matrix $M \in \mathbb{Q}^{d \times d}$ and an initial vector $u_0 = u \in \mathbb{Q}^{d}$. The system evolves as $u_{n+1} = M u_n$, and hence $u_n = M^n u_0$. We say that this system is a (discrete) Markov chain if (i) the initial vector $u_0$ is a distribution, i.e., all entries are non-negative, and the sum of entries is $1$, and (ii) each column of $M$ is a distribution, making it a stochastic matrix. We immediately observe that the orbit is thus a sequence of distributions. 

Analogously, a continuous-time linear dynamical system of dimension $d$ is given by an infinitesimal generator (or simply generator) matrix $A \in \mathbb{Q}^{d \times d}$ and an initial vector $v(0) \in \mathbb{Q}^d$. The system evolves as $\frac{d}{dt}v(t) = A v(t)$, and hence $v(t) = \exp(At)v(0)$. We say that this system is a continuous Markov chain if (i) the initial vector $v(0)$ is a distribution as before, and (ii) the entries in each column of the generator matrix $A$ add up to $0$, and the off-diagonal entries are non-negative. We refer to such a matrix $A$ as a stochastic generator matrix.

We prove analogues of the core results in \cite{Aghamov2025} (see also \cite{Vahanwala}). For notational convenience, we use $\mathbf{0}_d$ to denote the $d$-dimensional vector with all entries $0$, $\mathbf{1}_d$ to denote the $d$-dimensional vector with all entries $1$, and $I_d$ to denote the $d \times d$ identity matrix. We may omit the subscripts when the dimensions are clear from the context. We use $O$ to denote a block of appropriate dimensions, all of whose entries are $0$. The symbol $\top$, when in superscript, denotes matrix transposition. Linear algebraic notions that we assume familiarity with can be found in \cite[Sec.~2.3]{Aghamov2025}. 

\begin{theorem}
\label{thm::stochastic-to-regular}
    Given a distribution $v \in \mathbb{Q}^d$ and a stochastic generator matrix $A \in \mathbb{Q}^{d \times d}$, let $\ell \ge 1$ be the multiplicity of the eigenvalue $0$ of $A$. We can compute a generator matrix $B \in \mathbb{Q}^{(d - \ell) \times (d - \ell)}$, an embedding matrix $Q \in \mathbb{Q}^{d \times (d -\ell)}$, a distribution $s \in \mathbb{Q}^d$, and a vector $u \in \mathbb{Q}^{d-\ell}$ such that for all $t \ge 0$, we have that 
    $$
    \exp(At) v = s + Q \exp(Bt) u.
    $$
    In the above, the choice of $Q$ is independent of $v$.
\end{theorem}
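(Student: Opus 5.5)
We decompose $\mathbb{Q}^d = \ker(A^d) \oplus \operatorname{im}(A^d)$ into generalized eigenspaces: $\ker(A^d)$ is the generalized eigenspace of $0$, of dimension $\ell$, and $W := \operatorname{im}(A^d)$ is the sum of the generalized eigenspaces of the nonzero eigenvalues, of dimension $d-\ell$. Both subspaces are $A$-invariant and rational, since $A^d$ is a rational matrix. We fix a rational basis $q_1,\dots,q_{d-\ell}$ of $W$, obtained by column reduction of $A^d$, and let $Q$ be the matrix with these columns; this choice clearly does not depend on $v$. Since $W$ is $A$-invariant, $AQ = QB$ for a unique $B \in \mathbb{Q}^{(d-\ell)\times(d-\ell)}$, computable by solving a linear system. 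It follows that $\exp(At)Q = Q\exp(Bt)$, by comparing power series or uniqueness of ODE solutions.

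Next we split $v = w + Qu$ with $w \in \ker(A^d)$ and $u \in \mathbb{Q}^{d-\ell}$, which is a rational linear computation. Then $\exp(At)v = \exp(At)w + Q\exp(Bt)u$. It therefore suffices to show that $\exp(At)w = s$ is constant in $t$ and that $s$ is a distribution.

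The key step, and the main obstacle, is showing that the eigenvalue $0$ of a stochastic generator is semisimple, so that $\ker(A^d) = \ker(A)$ and hence $\exp(At)w = w$. I would prove this by choosing a rational $c > \max_i |A_{ii}|$ and setting $P = I + A/c$. Its off-diagonal entries are non-negative, its diagonal entries $1 + A_{ii}/c$ are positive, and its columns sum to $1$, so $P$ is column-stochastic. The generalized eigenspaces of $A$ at $0$ coincide with those of $P$ at $1$. The powers $P^n$ are stochastic and hence bounded in norm. If $P$ had a Jordan block of size at least $2$ for eigenvalue $1$, the entries of $P^n$ would grow linearly, which is a contradiction. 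Thus $Aw = 0$ and $\exp(At)w = w =: s$.

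Finally, we check that $s$ is a distribution. Because $\exp(At)$ is stochastic for $t \ge 0$ (it equals $e^{-ct}\exp(cPt)$, a mixture of powers of $P$), $\exp(At)v$ is a distribution for every $t$. Since every nonzero eigenvalue $\lambda$ of $P$ satisfies $|\lambda| \le 1$ and $\lambda \ne 1$, every nonzero eigenvalue $\mu = c(\lambda - 1)$ of $A$ has $\operatorname{Re}\mu < 0$. Hence $Q\exp(Bt)u \to 0$ as $t\to\infty$, and so $s = \lim_{t\to\infty}\exp(At)v$. As a limit of distributions, $s$ is a non-negative vector with entries summing to $1$, so it is a distribution; it is rational because it was computed as a rational projection of $v$. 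This gives $\exp(At)v = s + Q\exp(Bt)u$ for all $t \ge 0$, as required.
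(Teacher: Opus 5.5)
Your proof is correct and follows essentially the same route as the paper. You split $v$ along the $0$-eigenspace and the $A$-invariant complement $W$ of generalised eigenvectors for the nonzero eigenvalues, take $s$ to be the first component, and let $B$ be the matrix of $A|_W$ in the basis $Q$ (your $AQ=QB$ yields exactly the paper's $B=Q'AQ$), with semisimplicity of $0$ coming from the stochastic matrix $I+A/c$ as in Lemma~\ref{lem::stochastic-generator-spectrum}. Your explicit description $W=\operatorname{im}(A^d)$ and your limit argument that $s$ is a distribution usefully fill in points the paper leaves implicit (its proof never checks that $s$ is a distribution, only later remarking that $Q\exp(Bt)u\to\mathbf{0}$); one small slip is that you mean the eigenvalues $\lambda\neq 1$ of $P$, not the nonzero ones, when deducing $\operatorname{Re}\lambda<1$.
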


\begin{theorem}
\label{thm::regular-to-stochastic}
    Given a vector $u \in \mathbb{Q}^{d-1}$ and a generator matrix $C \in \mathbb{Q}^{(d-1)\times (d-1)}$, we can compute $\rho, \eta \in \mathbb{Q}$, a distribution $v \in \mathbb{Q}^d$, and a stochastic generator matrix $A \in \mathbb{Q}^{d \times d}$ such that for all $t \ge 0$ we have that
    $$
    \exp(At) v = s + \eta e^{-\rho t} \cdot Q \exp(Ct) u,
    $$
    where the pair $s, Q$ is arbitrarily chosen as $s = \mathbf{1}_d/d$ and $Q = \begin{bmatrix}
        I_{d-1} \\
        - \mathbf{1}_{d-1}^\top
    \end{bmatrix}$.
\end{theorem}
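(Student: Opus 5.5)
We prove Theorem~\ref{thm::regular-to-stochastic} by lifting $C$ to a $d$-dimensional generator that preserves the hyperplane of vectors with entry sum zero, and then shifting and scaling it so that it becomes a stochastic generator. First define $\hat C = Q C P \in \mathbb{Q}^{d\times d}$, where $P = \begin{bmatrix} I_{d-1} & \mathbf{0}_{d-1}\end{bmatrix}$ is the projection onto the first $d-1$ coordinates. Two identities drive the argument. Since $\mathbf{1}_d^\top Q = \mathbf{0}^\top$, every column of $\hat C$ sums to $0$. Since $PQ = I_{d-1}$, we have $\hat C Q = QC$, hence $\hat C^k Q = Q C^k$ for all $k$, and therefore $\exp(\hat C t) Q = Q\exp(Ct)$.

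Next, consider $A = \hat C - \rho(I_d - \tfrac{1}{d}\mathbf{1}_d\mathbf{1}_d^\top)$, which is a rational matrix. Its columns still sum to zero, because $\mathbf{1}^\top(I - \tfrac1d \mathbf{1}\mathbf{1}^\top) = \mathbf{0}^\top$. Its off-diagonal entries are $\hat C_{ij} + \rho/d$, so they are non-negative once we choose $\rho \in \mathbb{Q}$ with $\rho \ge d\max_{i\neq j}|\hat C_{ij}|$. With such a $\rho$, $A$ is a stochastic generator.

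Now we compute the dynamics. We have $A s = \hat C \mathbf{1}/d$, and $P\mathbf{1}_d = \mathbf{1}_{d-1}$, so $\hat C\mathbf{1} = QC\mathbf{1}_{d-1}$. This need not vanish, so $s$ is not automatically fixed, and this is the main obstacle. The fix is to replace $P$ by a left inverse of $Q$ that kills $\mathbf{1}_d$. Take $P' = P - \tfrac1d \mathbf{1}_{d-1}\mathbf{1}_d^\top$. Then $P'Q = PQ - \tfrac1d\mathbf{1}_{d-1}(\mathbf{1}_d^\top Q) = I_{d-1}$ and $P'\mathbf{1}_d = \mathbf{1}_{d-1}-\mathbf{1}_{d-1} = \mathbf{0}$. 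We therefore redefine $\hat C = QCP'$, which is still rational. With this choice $\hat C s = 0$, the intertwining $\hat C Q = QC$ is unchanged, and the $\rho$-term acts as $0$ on $s$ and as $-\rho\cdot$ on $\operatorname{range}(Q)$, since $\mathbf{1}^\top Q = \mathbf{0}^\top$. Consequently $As = 0$ and $AQ = Q(C - \rho I)$, which gives $\exp(At)Q = e^{-\rho t} Q\exp(Ct)$. The bound on $\rho$ is then taken with respect to this new $\hat C$.

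Finally, set $v = s + \eta Q u$. Linearity gives $\exp(At)v = s + \eta e^{-\rho t} Q\exp(Ct)u$ for all $t \ge 0$. It remains to check that $v$ is a distribution. Its entries sum to $1$ because $\mathbf{1}^\top Q = \mathbf{0}^\top$. Its entries are $\tfrac1d + \eta u_i$ for $i<d$ and $\tfrac1d - \eta\,\mathbf{1}^\top u$ for the last one. Choosing a positive rational $\eta \le 1/(d(\|u\|_1+1))$ makes every entry non-negative, and $\eta = 0$ also works in the degenerate case. All of $\rho$, $\eta$, $v$ and $A$ are computed by rational arithmetic.
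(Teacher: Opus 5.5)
Your proposal is correct and, once you switch to the corrected left inverse, it is the paper's proof: your $P' = P - \tfrac1d\mathbf{1}_{d-1}\mathbf{1}_d^\top$ is exactly the lower block $Q'$ of $\begin{bmatrix} s & Q\end{bmatrix}^{-1}$. Since $QQ' = I - S$ with $S = \tfrac1d\mathbf{1}\mathbf{1}^\top$, your $A = QCQ' - \rho(I - S)$ is the paper's $A = Q(C-\rho I)Q'$, and $v = s + \eta Q u$ is chosen the same way, with the only difference being that you give explicit admissible values of $\rho$ and $\eta$ where the paper just says ``large enough'' and ``chosen to ensure''.
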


\section{Proofs}
We begin by noting a basic property of generator matrices, which resolves a potential ambiguity in the statement of Thm.~\ref{thm::stochastic-to-regular} regarding the multiplicity of the eigenvalue $0$.

\begin{lemma}
    \label{lem::stochastic-generator-spectrum}
    Let $A$ be a stochastic generator matrix. We have that $A$ has $0$ as an eigenvalue, and its algebraic multiplicity is equal to its geometric multiplicity. Furthermore, all other eigenvalues of $A$ have negative real part.
\end{lemma}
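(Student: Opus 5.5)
The plan is to reduce everything to the discrete-time case by uniformization. Since $A$ is a stochastic generator matrix, its diagonal entries are non-positive. Choose a rational $\lambda > 0$ with $\lambda \ge \max_i |A_{ii}|$, and strictly larger than it (this will matter later). Set $M = I + A/\lambda$. The off-diagonal entries of $M$ are non-negative because those of $A$ are. The diagonal entries $1 + A_{ii}/\lambda$ are non-negative, and in fact strictly positive once $\lambda > \max_i |A_{ii}|$. Each column of $M$ sums to $1$ because the columns of $A$ sum to $0$. So $M$ is a column-stochastic matrix, and its spectrum is $\{1 + \mu/\lambda : \mu \text{ an eigenvalue of } A\}$. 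Moreover, $A$ and $M$ have the same eigenvectors and the same Jordan structure, since $A = \lambda(M - I)$.

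First, $0$ is an eigenvalue of $A$. This is immediate: $\mathbf{1}^\top A = \mathbf{0}^\top$ because the columns of $A$ sum to $0$, so $A^\top$ is singular, and hence so is $A$.

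Next comes the spectral bound, via Gershgorin's theorem applied to the columns of $A$. Every eigenvalue $\mu$ lies in some disc centred at $A_{jj} \le 0$ with radius $\sum_{i \ne j} A_{ij} = -A_{jj}$. Each such disc lies in the closed left half-plane and touches the imaginary axis only at $0$. Hence every eigenvalue $\mu \ne 0$ satisfies $\operatorname{Re}\mu < 0$. (If $A_{jj} = 0$, the disc degenerates to the point $0$.)

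The main obstacle is showing that algebraic and geometric multiplicity of $0$ coincide, i.e., that $A$ has no nontrivial Jordan block for $0$. I would argue through boundedness. The matrix $\exp(At)$ is column-stochastic for all $t \ge 0$, because $\exp(At) = e^{-\lambda t}\exp(\lambda t M) = \sum_k e^{-\lambda t}\frac{(\lambda t)^k}{k!} M^k$ is a convex combination of stochastic matrices. Its entries therefore lie in $[0,1]$, so $\|\exp(At)\|$ is uniformly bounded for $t \ge 0$.

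Now suppose there were a Jordan chain $A w_1 = 0$, $A w_2 = w_1$ with $w_1 \ne 0$. Then $\exp(At) w_2 = w_2 + t w_1$, which is unbounded as $t \to \infty$, a contradiction. So every Jordan block for $0$ has size $1$, and the two multiplicities are equal.

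As an alternative for this step, one could appeal to the known fact that a stochastic matrix $M$ has a semisimple eigenvalue $1$, since its powers are bounded. Then transfer the statement to $A$ via $A = \lambda(M - I)$, which preserves Jordan structure and maps eigenvalue $1$ of $M$ to eigenvalue $0$ of $A$. Either way, boundedness of the semigroup is the key ingredient.
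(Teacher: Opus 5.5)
Your proof is correct, but it takes a somewhat different route from the paper. The paper also passes to the stochastic matrix $M = I + aA$. It then extracts everything from a cited Perron--Frobenius lemma for $M$: the eigenvalue $1$ of $M$ is semisimple, and every other eigenvalue satisfies $|1 + a\mu| \le 1$. Through the Jordan-structure-preserving correspondence $\mu \mapsto 1 + a\mu$, this gives semisimplicity of $0$. Since the disc $|1 + a\mu| \le 1$ meets the imaginary axis only at $0$, it also gives the strict left-half-plane bound.

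In your main line, $M$ serves only to show that $\exp(At)$ is column-stochastic. The spectral location comes from Gershgorin applied directly to the columns of $A$. This is the same geometry of a disc tangent to the imaginary axis at $0$, but obtained without any citation. Semisimplicity comes from the bounded semigroup ruling out $\exp(At)w_2 = w_2 + t w_1$.

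Your argument is therefore self-contained and stays entirely in continuous time. It essentially unpacks what the cited Perron--Frobenius fact amounts to here: bounded powers of a stochastic matrix forbid a nontrivial Jordan block at $1$. The paper's version is shorter and deliberately mirrors the discrete-time development it builds on. Your ``alternative'' at the end is exactly the paper's route.

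One small remark: the strict inequality $\lambda > \max_i |A_{ii}|$, which you say ``will matter later,'' is never used. Any $\lambda > 0$ with $\lambda \ge \max_i |A_{ii}|$ suffices for every step you carry out.
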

\begin{proof}
    It follows by definition that $\mathbf{1}^\top A = \mathbf{0}^\top$, and we thus have found a left eigenvector for the eigenvalue $0$. Observe that we can choose $a > 0$ small enough so that $M = I + aA$ is a stochastic matrix. It is easy to check that $\lambda$ is an eigenvalue of $A$ if and only if $\lambda'= 1 + a\lambda$ is an eigenvalue of $M$, and that their corresponding eigenvectors have the same order, and span the same spaces. Perron-Frobenius theory (see, e.g., \cite[Lem.~3]{Aghamov2025}) gives us the desired properties of the spectrum of $M$: it necessarily has $1$ as an eigenvalue with equal algebraic and geometric multiplicities, and all other eigenvalues have absolute value at most $1$. The former corresponds to the eigenvalue $0$ of $A$, the latter can only correspond to eigenvalues of $A$ that are strictly to the left of the imaginary axis.
\end{proof}

The proofs of Thms.~\ref{thm::stochastic-to-regular} and \ref{thm::regular-to-stochastic} mirror the proofs in \cite[Sec.~3]{Aghamov2025}. 

\begin{proof}[Proof of Thm.~\ref{thm::stochastic-to-regular}]
Let $V_0$ be the $\ell$-dimensional subspace spanned by the eigenvectors of the eigenvalue $0$, and let $W$ be the $(d -\ell)$-dimensional subspace (of $\mathbb{C}^d$) spanned by the (generalised) eigenvectors of all the other eigenvalues. Recall the basic properties that $V_0, W$ together span $\mathbb{C}^d$ \cite[Thm.~6]{Aghamov2025}, and that $W$ is perpendicular to the space spanned by the left eigenvectors of the eigenvalue $0$.

We can thus compute matrices $P \in \mathbb{Q}^{d \times \ell}, Q \in \mathbb{Q}^{d \times (d-\ell)}$ whose columns respectively form bases of $V_0, W$. Observe that $v$ can be uniquely expressed as $Pw + Qu$, where $w \in \mathbb{Q}^\ell$, and $u \in \mathbb{Q}^{d-\ell}$. We choose $s = Pw$.

It remains to define $B$, and then show that $\exp(At)v = s + Q \exp(Bt) u$. To that end, define the invertible $R = \begin{bmatrix}
    P & Q
\end{bmatrix}$, and express $R^{-1}$ as $\begin{bmatrix}P' \\ Q' \end{bmatrix}$. Consider the matrix
\begin{align*}
    D &= R^{-1}AR \\
    &= \begin{bmatrix}P' \\ Q' \end{bmatrix} A \begin{bmatrix} P & Q \end{bmatrix} \\
    &= \begin{bmatrix}P' \\ Q' \end{bmatrix} \begin{bmatrix} O & AQ \end{bmatrix} \\
    &= \begin{bmatrix}
        O & O \\
        O & Q'AQ.
    \end{bmatrix}
\end{align*}
In the above, we used the property that all of the columns of $AQ$ are in $W$, and that the rows of $P'$ span the space perpendicular to $W$. We define $B = Q'AQ$, and observe that by similar reasoning as above, for all $n \ge 1$
$$
D^n = \begin{bmatrix}
    O & O \\ O & B^n
\end{bmatrix} = \begin{bmatrix}
    O & O \\ O & Q'A^nQ
\end{bmatrix}.
$$ Dually, we also deduce that for all $n \ge 1$, we have $A^n= QB^n Q'$.

Let us now express the matrix exponential using the formal power series. We get
\begin{align*}
    \exp(At) v &= R \left( \exp(Dt) \right) R^{-1}v \\
    &= \begin{bmatrix}
        P & Q
    \end{bmatrix}
    \left(\sum_{i=0}^\infty \frac{t^i}{i!} D^i \right)
    \begin{bmatrix}
        w \\ u
    \end{bmatrix} \\
    &= s + Qu + \sum_{i=1}^\infty \frac{t^i}{i!}
    \begin{bmatrix}
        P & Q
    \end{bmatrix}
    \begin{bmatrix}
        O & O \\
        O & B^i
    \end{bmatrix}
    \begin{bmatrix}
        w \\ u
    \end{bmatrix} \\
    &= s + Qu + \sum_{i=1}^\infty \frac{t^i}{i!}QB^i u \\
    &= s + Q \left(\sum_{i=0}^\infty \frac{t^i}{i!} B^i \right)u \\
    &= s + Q \exp(Bt) u,
\end{align*}
as desired.
\end{proof}

We remark that \cite[Sec.~3]{Aghamov2025} also discusses how the spectral decompositions of $B$ and $A$ are related. The same discussion applies in the present setting too. In particular, $B$ and $A$ have the same nonzero eigenvalues, and hence $Q \exp(Bt) u$ converges to $\mathbf{0}$.

\begin{proof}[Proof of Thm.~\ref{thm::regular-to-stochastic}]
    We prove the theorem by using the calculations from the previous proof. We have fixed $Q$, and $P$ is the matrix whose single column is $s$. This determines $P', Q'$; in particular $P' = \mathbf{1}^\top$. We choose the scalar $\eta$ to ensure that $v = s + \eta Q u$ is a distribution. The scalar $\rho$ is chosen to define a matrix $B = C - \rho I$ such that $QBQ'$ is a stochastic generator matrix. It would then follow that 
    $$
    \exp(At)v = s + \eta Q \exp(Bt) u = s + \eta e^{-\rho t} \exp(Ct) u.
    $$

    Let us show how to choose $\rho$. We already have that $\mathbf{1}^\top Q C Q' = \mathbf{0}^\top$ because $\mathbf{1}^\top Q = \mathbf{0}^\top$. We only need to show that we can make the off-diagonal entries non-negative by subtracting $\rho I$ from $C$.

    We have
    $$
    Q(C - \rho I)Q' = QCQ' - \rho QQ'.
    $$
    Since $$\begin{bmatrix}
        s & Q
    \end{bmatrix}\begin{bmatrix}
        \mathbf{1}^\top \\ Q'
    \end{bmatrix} = I,$$
    we have that $QQ' = I - S$, where $S$ is the matrix with all entries equal to $1/d$. Thus, $Q(C - \rho I)Q' = QCQ' + \rho S - \rho I$. Choosing $\rho$ to be large enough thus ensures that all the off-diagonal entries of $QBQ'$ are non-negative, as desired.
\end{proof}

\bibliographystyle{plainurl}
\bibliography{main}

\end{document}